\newcommand{\vect}[1]{\vec{\mathbf{#1}}}
\newcommand{\ZZ}{\mathbb{Z}}
\newcommand{\fname}{\mathsf}
\newcommand{\aname}[1]{\mathbf{\fname{#1}}}
\newcommand{\vname}{\mathsf}
\newcommand{\samplefrom}{\xleftarrow{\$}}
\newcommand{\concat}{\parallel}
\newcommand{\Com}{\mathsf{Com}}
\newcommand{\val}{\vname{\$val}}
\newcommand{\sval}[1]{\vname{\$ #1}}
\newcommand{\party}{\mathcal{P}}
\newcommand{\valset}{\mathbb{V}}
\newcommand{\inp}{\vname{in}}
\newcommand{\coin}{\vname{coin}}
\newcommand{\manager}{\mathcal{M}}
\renewcommand{\Pr}[1]{\fname{Pr}\Big[#1\Big]}
\newtheorem{theorem}{Theorem}
\newtheorem{definition}[theorem]{Definition}
\newcommand{\qedsymbol}{\ensuremath{\blacksquare}}
\newenvironment{proof}{\noindent{\bf Proof:~~}}{\qedsymbol}
\def\BibTeX{{\rm B\kern-.05em{\sc i\kern-.025em b}\kern-.08em
    T\kern-.1667em\lower.7ex\hbox{E}\kern-.125emX}}
\begin{document}

\title{zkHawk: Practical Private Smart Contracts from MPC-based Hawk
\thanks{This publication has emanated from research conducted with the financial support of Science Foundation Ireland grants 13/RC/2106 (ADAPT) and 17/SP/5447 (FinTech Fusion). This work was also supported in part by Science Foundation Ireland grant 13/RC/2094 (Lero).}
}

\author{\IEEEauthorblockN{Aritra Banerjee}
\IEEEauthorblockA{
ADAPT Centre\\
\textit{School of Comp Sci \& Stats}\\
\textit{Trinity College Dublin}\\
Dublin, Ireland \\
abanerje@tcd.ie
}
\and
\IEEEauthorblockN{Michael Clear}
\IEEEauthorblockA{
\textit{School of Comp Sci \& Stats}\\
\textit{Trinity College Dublin}\\
Dublin, Ireland \\
clearm@tcd.ie
}
\and
\IEEEauthorblockN{Hitesh Tewari}
\IEEEauthorblockA{
\textit{School of Comp Sci \& Stats}\\
\textit{Trinity College Dublin}\\
Dublin, Ireland \\
htewari@tcd.ie}
}

\maketitle

\begin{abstract}
Cryptocurrencies have received a lot of research attention in recent years following the release of the first cryptocurrency Bitcoin. With the rise in cryptocurrency transactions, the need for smart contracts has also increased. Smart contracts, in a nutshell, are digitally executed contracts wherein some parties execute a common goal. The main problem with most of the current smart contracts is that there is no privacy for a party's input to the contract from either the blockchain or the other parties.  Our research builds on the Hawk project that provides transaction privacy along with support for smart contracts. However, Hawk relies on a special trusted party known as a manager, which must be trusted not to leak each party's input to the smart contract. In this paper, we present a practical private smart contract protocol that replaces the manager with an MPC protocol such that the function to be executed by the MPC protocol is relatively lightweight, involving little overhead added to the smart contract function, and uses practical sigma protocols and homomorphic commitments to prove to the blockchain that the sum of the incoming balances to the smart contract matches the sum of the outgoing balances.
\end{abstract}

\begin{IEEEkeywords}
Hawk, Private Smart Contracts, Multi-Party Computation
\end{IEEEkeywords}

\section{Introduction}
Cryptocurrencies  are  generally  decentralized and  based  on  a  public distributed ledger called a blockchain. Many cryptocurrencies that followed Bitcoin, including Ethereum, do not offer transaction privacy. In a nutshell, transaction privacy hides the origin of a transaction (the sending party), along with the amounts transacted, such that a third party inspecting the blockchain cannot track the flow of money between accounts. Zcash \cite{sasson2014zerocash} and Monero \cite{miller2017empirical} are two popular cryptocurrencies that provide transaction privacy. A major limitation of these cryptocurrencies is that they do not support smart contracts. In short, a smart contract is a program that determines a set of deposits and withdrawals to/from a set of (anonymous) accounts. A smart contract is created by a set of parties to  facilitate  a  common  goal. We call these set of parties the \textit{participant}s of the smart contract.
The  absence  of  support  for  smart  contracts  in  existing cryptocurrencies with  transaction  privacy  inspired  the  Hawk  project \cite{kosba2016hawk}. Hawk  obtains transaction privacy with support for smart contracts, albeit with the trust assumption of a special type of entity known as a \textit{manager}. The manager sees the inputs of each participant in the smart contract and must be explicitly trusted not to leak them. There have been other recent developments in privacy where the smart contract execution is done off-chain like in Arbitrum\cite{kalodner2018arbitrum} which used the concept of Virtual Machines (VMs) along with the trust assumption of a \textit{manager} as well. But Arbitrum provided only partial privacy, i.e, the input of the parties are hidden from blockchain but not from each other. The goal of this research is to build upon Hawk so that along with support for transaction privacy and smart contracts, we additionally obtain  privacy  for  the  participant’s inputs (i.e. they are hidden from the blockchain and each other), without the trust assumption of the \textit{manager}.\\
Our starting point is an observation made by the authors of the Hawk protocol \cite{kosba2016hawk}, namely that the manager can be replaced by running a multi-party computation (MPC) protocol between the parties. However the authors of the Hawk paper point out that this approach would be presently impractical. Indeed it can be readily seen that by applying MPC to the design of Hawk as it is incurs the prohibitively expensive overhead of executing a zk-SNARK proof \cite{ben2014succinct,groth2010short,banerjee2020demystifying} within an MPC program  (a circuit in practice). We therefore propose to remove the zk-SNARK proof from the MPC program. However this leaves us with a considerable challenge - How do we prove to the blockchain that the sum of the in-going account balances in a smart contract is equal to the sum of the outgoing balances? As a solution, we will compute a relatively practical proof of knowledge proof within the MPC program such that all parties contribute to the witness. The resulting effect is to guarantee that the difference between the sum of the input balances and the sum of the outgoing balances is zero.
An MPC protocol \cite{canetti1996adaptively} enables a collection of parties to interact with each other in several ``rounds" of communication in order to compute a function $f$ and learn the output $y = f(x_1,x_2,\hdots,x_n)$ where $x_i$ is party $i's$ input. Such that even if up to $t$ parties are malicious (for some collusion tolerance $t$); they cannot learn the other parties' inputs i.e. they are kept secret.
We distinguish between two types of input/output privacy, which are:
\begin{itemize}
    \item \textbf{Weak Input/Output Privacy} is defined such that the execution of a smart contract does not reveal any parties' inputs/outputs to the public. But the inputs/outputs of each party are not hidden from each other.
    \item \textbf{Strong Input/Output Privacy} is defined such that parties’ inputs/outputs are not leaked to both the public and to each of the other parties.
\end{itemize}
Furthermore, we distinguish between two types of PSC evaluation: non-interactive PSC (NI-PSC) and interactive PSC (I-PSC). In the former, the blockchain executes the smart contract function and no interaction is needed between the contract participants. In contrast, interactive PSC involves executing the smart contract function off-chain by all parties interacting for the purpose of executing an MPC protocol to compute the smart contract function. This is the setting primarily explored in this paper. We do however briefly touch upon non-interactive PSC in Section~\ref{sec:nipsc} but we leave further development to future work.

We consider a cryptocurrency that supports transaction privacy. More precisely, we expect, for example, to inherit a blockchain that implements the $\vname{Blockchain}_{\vname{cash}}$ program in  \cite{kosba2016hawk} and a user program that implements the $\vname{UserP}_\vname{cash}$ protocol in \cite{kosba2016hawk} i.e. the operations of mint and pour from ZCash/Hawk that provides transaction privacy. To avoid implementing these operations in this work, we interface with such a construction through a commitment scheme that is defined by the PSC evaluation protocol.
First, we formalize an (interactive) PSC evaluation protocol in this paper as shown in Section \ref{sec:fd}. In the formal definitions sections we also specify the set of input coins $\coin_1, \hdots, \coin_n$ which are commitments to hidden values $\val_1, \hdots, \val_n$ respectively.
Next, we construct a concrete PSC evaluation protocol as shown in Section \ref{sec:PSCEP} that allows the contract participants to evaluate a smart contract function off-chain that yields a new set of coins $\coin'_1, \hdots, \coin'_n$ that hide the output values $\val'_1, \hdots, \val'_n$ from the smart contract such that the blockchain can be sure that $\sum_{i \in [n]} \val'_i = \sum_{i \in [n]} \val_i$. Finally, we prove our protocol $t$-secure for collusion tolerance $t < n$ assuming the hardness of the discrete logarithm problem in the random oracle model (Section \ref{sec:security}). Although this security notion is weaker than security in the universal composability framework, it shows our protocol resists a wide variety of attacks from a malicious adversary that can \emph{statically} corrupt $t < n$ parties. We intend to prove full security in the UC framework \cite{canetti2001universally} in future work.

One of the primary goals of our protocol is to minimize the computation overhead of the MPC program to ensure practicality. With this in mind, we use simple and lightweight cryptographic tools such as standard sigma protocols that can be computed efficiently. We also exploit a simple trick that involves decomposing the output coin commitment into commitments to its individual bits, then for each bit, sending both candidate commitments (a commitment to a zero and a commitment to a one) in random order to the blockchain, and choosing between the candidates using a simple multiplexer circuit in the MPC program based on the output value of the coin. Therefore, this idea simultaneously functions as a range proof and a way to reduce computation in the MPC program (we can use SPDZ\cite{araki2018generalizing}) to a simple multiplexer circuit. The associated cost of this approach is the requirement of two NIZK proofs for every bit. These proofs are however not computed within the MPC program so the approach remains practical.
\subsection{Example of a Private Smart Contract}
Consider a sealed bid auction as a Private Smart Contract as shown in Figure \ref{fig:auction}. In a sealed-bid auction, one of the participants is the seller and the other participants are bidders. Each bidder submits a bid and the bidder with the highest bid wins the auction. All bids are kept private.\\
Consider a function $f$ that implements this smart contract. Suppose there are $k$ bidders, and in total $n = k + 1$ parties (recall the seller is one of the parties). The seller is designated as the first party. We define:
\begin{itemize}
    \item $f((\sval{seller}, \cdot), (\sval{bidder}_1, \cdot), \hdots, (\sval{bidder}_k, \cdot))$
    \begin{itemize}
        \item $\sval{highest} \gets 0$
        \item
        $\vname{winner} \gets 0$
        \item
        For $i \in \{1, \hdots, k\}:$
        \begin{itemize}
            \item If $\sval{bidder}_i > \sval{highest}$
            \begin{itemize}
                \item $\sval{highest} \gets \sval{bidder}_i$
                \item
                $\vname{winner} \gets i$
            \end{itemize}
        \end{itemize}
        \item
        $\sval{seller}' \gets \sval{seller} + \sval{highest}$
        \item
        $\sval{bidder}'_{\vname{winner}} \gets 0$
        \item
        For $i \in \{1, \hdots, k\} \setminus \{\vname{winner}\}:$
        \begin{itemize}
            \item 
            $\sval{bidder}'_i \gets \sval{bidder}_i$
        \end{itemize}
        \item
        Return $(\sval{seller}', \sval{bidder}'_1, \hdots, \sval{bidder}'_k, \vname{out} \\ := \vname{winner})$
    \end{itemize}
\end{itemize}

\begin{figure}[H]
\centering
\includegraphics[width=3.5in]{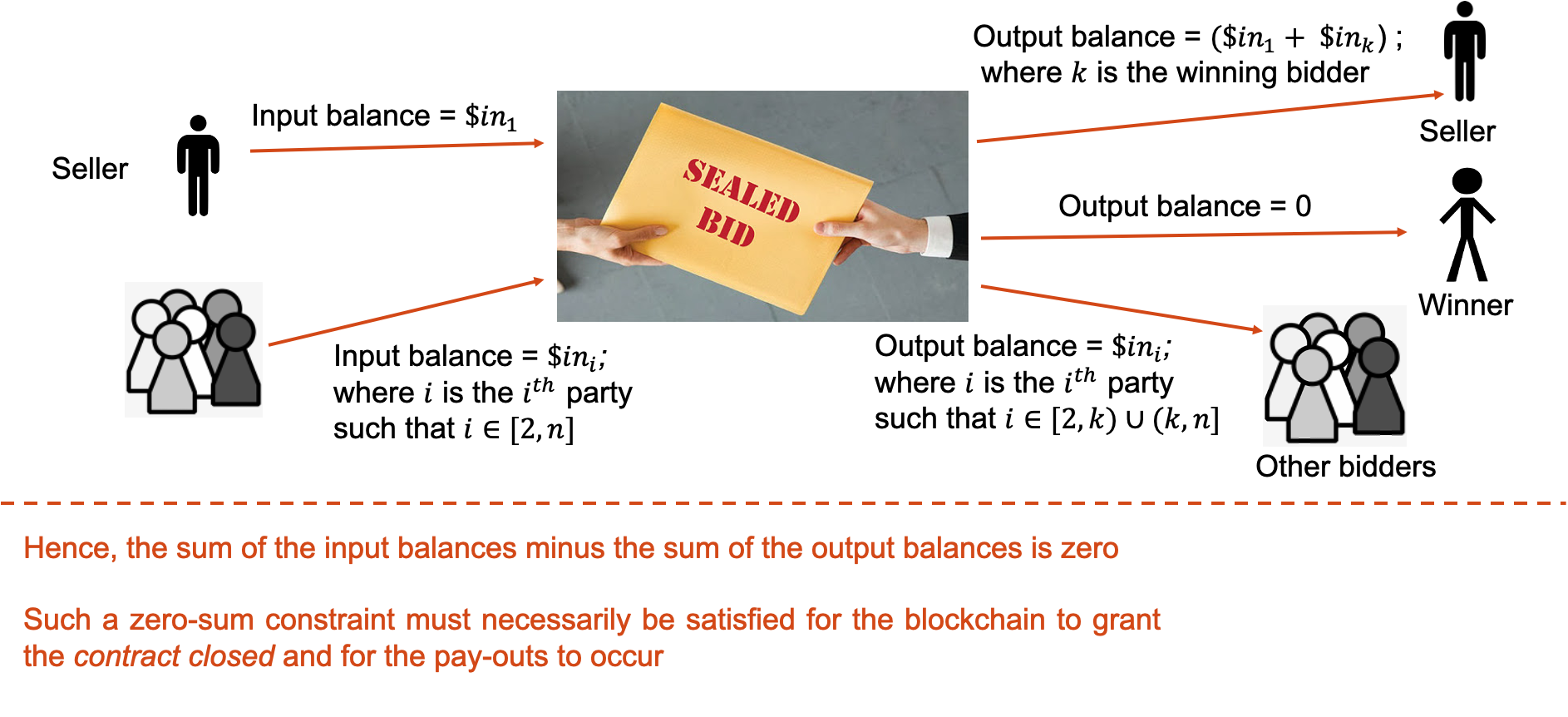}
\caption{Example of a Private Smart Contract: A Sealed Bid Auction}
\label{fig:auction}
\end{figure}

\subsection{Overview of Interactive Private Smart Contracts}
Private Smart Contract (PSC) execution involves three phases as shown in Figure \ref{fig:smart}. The first, which we call the \textit{freeze} phase (a term borrowed from Hawk), is where each participant sends its input coin to the blockchain. The blockchain ``freezes" the coins so that they cannot be spent and records them for use later. The second phase is \textit{computation}. This occurs off-chain and involves the parties running an MPC protocol between them. The third phase is what we call \textit{finalization} where at least one party must notify the blockchain of the result of the MPC program. There are a set of output coins resulting from smart contract execution (one for each party). The blockchain has to check that the sum of the input coins is equal to the sum of the output coins. Once this is asserted, the blockchain can award the output coin to each party; we call this \textit{contract closure}\footnote{A contract is said to be closed when the zero-sum constraint is evaluated to be satisfied and the payouts are made.}.

\begin{figure}[H]
\centering
\includegraphics[width=3.5in]{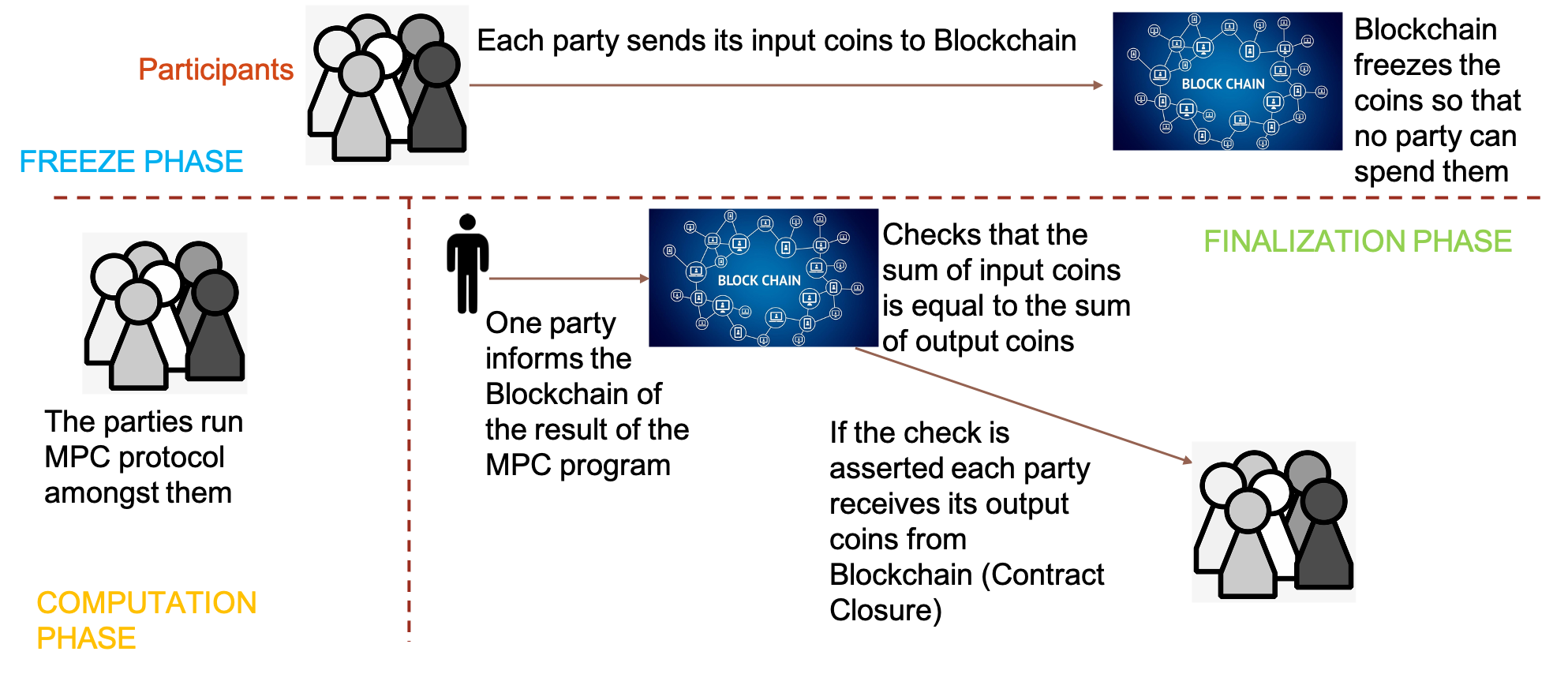}
\caption{How Interactive Private Smart Contracts will Work}
\label{fig:smart}
\end{figure}

\section{Formal Definitions}\label{sec:fd}
We distinguish amounts of currency with a leading ``\$" (dollar sign) symbol before the associated variable name or literal value to enhance readability. We define the set of valid currency values $\valset$ as the set of non-negative integers less than a strict upper bound $L$. More formally, we have that $\valset := \{\val \in \ZZ : 0 \leq \val < L\}$. For convenience, we choose $L$ as a power of 2 i.e. we let $L = 2^\ell$ for some positive integer $\ell$.

\begin{definition}\label{def:scf}
An $n$-party smart contract is an $n$-ary function $f : (\valset \times \{0, 1\}^\ast)^n \to (\valset^n \times \{0, 1\}^\ast) \cup \{\bot\}$ satisfying the following property:
\begin{itemize}
    \item For all choices of $\val_1, \hdots, \val_n \in \mathbb{V}$ and $\inp_1, \hdots, \inp_n \in \{0, 1\}^\ast$, one of the following statements is true
    \begin{enumerate}
        \item $t = \bot$ (failure)
        \item $t = (\val'_1, \hdots, \val'_n, \vname{out}) \land $
        \[\sum_{i \in [n]} \val'_i - \sum_{i \in [n]} \val_i = 0 \; \text{(zero-sum constraint)}\]
    \end{enumerate}
where $t = f((\val_1, \inp_1), \hdots, (\val_n, \inp_n))$.
\end{itemize}
\end{definition}

\subsection{Coins}
Following on from Zcash \cite{sasson2014zerocash} and Hawk \cite{kosba2016hawk}, we define a \emph{coin} as a commitment to some value $\val$ with randomness $r$ using some binding and hiding commitment scheme $\Com$. We write this as $\coin = \Com(\val; r)$. We assume the inclusion of a mechanism to provide transaction privacy in a manner such as Zcash and Hawk, but we endeavor to sidestep repeating the details here by defining an appropriate interface between such a mechanism for transaction privacy and a PSC evaluation protocol. We achieve this by assuming that the value associated with the input coin $\coin_i$ of each party $\party_i$ is private; that is, hidden from the blockchain and the other parties. Execution of the smart contract produces an output coin $\coin'_i$ for each party $\party_i$ whose value is also hidden from the blockchain and the other parties. A particular PSC evaluation protocol, defined momentarily, specifies the commitment scheme that it uses to bind a value $\val$ to a $\coin$ using randomness $r$.

\subsection{PSC Evaluation Protocol}
We now present the notion of a private smart contract (PSC) evaluation protocol. This is a protocol that consists of a blockchain program, which handles three types of messages $(\vname{freeze}, \cdot)$, $(\vname{compute}, \cdot)$ and $(\vname{finalize}, \cdot)$ and a collection of user parties $\party_1, \hdots, \party_n$. We defer to the full version a formal generalized definition of a PSC evaluation protocol using the blockchain model of cryptography \cite{kosba2016hawk} in the universal composability (UC) \cite{canetti2001universally} framework. Our definition here is more specialized as it specifically addresses the case of interactive PSC, which is what our protocol targets in this paper. In interactive PSC, computation of the smart contract function occurs off-chain, most likely using a multi-party computation protocol. Several elements of our definition are inspired by the blockchain model \cite{kosba2016hawk} but our model here is effectively ``stripped-down" and tailored to the task in hand. We require a special trusted incorruptible entity $\manager$ that is defined as follows. If $\manager$ receives a message $(\vname{input}, F, x_i)$ from every party $\party_i$ for $i \in [n]$, then it privately computes $y = F(x_1, \hdots, x_n)$ and sends $y$ to every party $\party_i$ with the message $(\vname{output}, F, y)$. Note that there is an authenticated and private channel between $\manager$ and every party $\party_i$. Such an entity $\manager$ allows us to model a correct and secure idealized MPC protocol.

\begin{definition}\label{def:pscprot}
A PSC evaluation protocol $\pi$ (in the interactive setting) is a tuple $(\Com, B, U)$  where $\Com$ is a commitment scheme used to establish input and output coins, $B = (\aname{Init}, \aname{Freeze}, \aname{Finalize})$ is the blockchain program (modelled as a tuple of stateful PPT algorithms) and $U = (\aname{Init}, \aname{Create}, \aname{Freeze}, \aname{PrepareCompute}, \aname{Finalize})$ is the user program. The program $B$ is passed to the blockchain program wrapper $B'$, defined in Figure~\ref{fig:blockchain_wrapper}, to produce a blockchain functionality. The program $U$ is passed to the user program wrapper $U'$, defined in Figure~\ref{fig:user_wrapper} to specify user behavior during execution of the protocol $\pi$.
\end{definition}

\begin{figure}[!ht]
\begin{small}
\begin{center}
\hspace{-19pt}
Blockchain program wrapper $B'(B):$
\begin{tabular}{l}
\begin{minipage}{3in}
\begin{tabbing}
123\=12\=12\=12\=12\=\kill
\textbf{Init:} \\
\> $\vname{Contracts} \gets \{\}$\\
\> Call $\aname{B.Init()}$ (i.e. Call program $B$)\\
\> Send blockchain state to $\mathcal{A}$.
\end{tabbing}
\end{minipage} \vspace{6pt} \\
\begin{minipage}{3in}\vspace{6pt}
\begin{tabbing}
123\=12\=12\=12\=12\=\kill
\textbf{Freeze:} Upon receiving $\vname{msg} := (\vname{freeze}, \vname{id}, P, \cdot)$ from $\party$:\\
\> Send $(\vname{msg}, \party)$ to $\mathcal{A}$\\
\> Assert $\party \in P$\\
\> If $(\vname{id}, \cdot, \cdot, \cdot) \notin \vname{Contracts}$:\\
\>\> $\vname{Contracts} \gets \vname{Contracts} \cup \{(\vname{id}, P, P' := \{\}, \vname{freeze})\}$ \\
\> Assert $(\vname{id}, P, P', t := \vname{freeze}) \in \vname{Contracts}$\\
\> Assert $\party \notin P'$\\
\> If $\aname{B.Freeze}(\vname{msg}, \party) = 1$: (i.e. Call program $B$)\\
\>\> If $P' \cup \{\party\} = P$: \\
\>\>\> $t \gets \vname{compute}$\\
\>\> Replace $(\vname{id}, P, P', \vname{freeze})$ in $\vname{Contracts}$\\
\>\>\> with $(\vname{id}, P, P' \cup \{\party\}, t)$\\
\> Send blockchain state to $\mathcal{A}$
\end{tabbing}
\end{minipage} \vspace{6pt} \\
\begin{minipage}{3in}\vspace{6pt}
\begin{tabbing}
123\=12\=12\=12\=12\=\kill
\textbf{Compute:} Computation is performed off-chain
\end{tabbing}
\end{minipage} \vspace{6pt} \\
\begin{minipage}{3in}\vspace{6pt}
\begin{tabbing}
123\=12\=12\=12\=12\=\kill
\textbf{Finalize:} Upon receiving $\vname{msg} := (\vname{finalize}, \vname{id}, \cdot)$ from $\party$:\\
\> Send $(\vname{msg}, \party)$ to $\mathcal{A}$ \\
\> Assert $(\vname{id}, P, P', \vname{compute}) \in \vname{Contracts}$ \\
\> If $\aname{B.Finalize}(\vname{msg}) = 1$: \\
\>\> Replace $(\vname{id}, P, P', \vname{compute})$ in $\vname{Contracts}$ \\
\>\>\> with $(\vname{id}, P, P', \vname{finalized})$ \\
\> Send blockchain state to $\mathcal{A}$
\end{tabbing}
\end{minipage} 
\end{tabular}
\end{center}
\caption{{\small
Blockchain program wrapper $B'(B)$ for PSC evaluation.\\ Note that $\mathcal{A}$ is the adversary.
}}
\label{fig:blockchain_wrapper}
\end{small}
\end{figure}

\begin{figure}[!ht]
\begin{small}
\begin{center}
\hspace{-19pt}
User program wrapper $U'(U):$
\begin{tabular}{l}
\begin{minipage}{3in}
\begin{tabbing}
123\=12\=12\=12\=12\=\kill
\textbf{Init:} \\
\> $\vname{Computations} \gets \{\}$\\
\> $\vname{Coins} \gets \{\}$\\
\> $\vname{FrozenCoins} \gets \{\}$\\
\> Call $\aname{U.Init()}$ (i.e. Call program $U$)\\
\end{tabbing}
\end{minipage} \vspace{6pt} \\
\begin{minipage}{3in}\vspace{6pt}
\begin{tabbing}
123\=12\=12\=12\=12\=\kill
\textbf{Freeze:} Upon receiving message $(\vname{freeze}, f, P, \val, \inp)$:\\
\> $r \samplefrom \{0, 1\}^{\ell_\Com}$ \\
\> $\coin \gets \Com(\val; r)$ \\
\> $\vname{Coins} \gets \vname{Coins} \cup \{\coin\}$ \\
\> $\vname{id} \gets \aname{U.Create}(f, P)$ \\
\> $\vname{FrozenCoins} \gets \vname{FrozenCoins} \cup \{(\vname{id}, \coin)\}$ \\
\> $\vname{msg} \gets \aname{U.Freeze}(\vname{id}, (\val, r), \inp)$ \\
\> Send $\vname{msg}$ to blockchain \\
\end{tabbing}
\end{minipage} \vspace{6pt} \\
\begin{minipage}{3in}\vspace{6pt}
\begin{tabbing}
123\=12\=12\=12\=12\=\kill
\textbf{Compute:} Upon receiving message $(\vname{compute}, \vname{id})$:\\
\> $(\hat{f}, P, x) \gets \aname{U.PrepareCompute}(\vname{id})$\\
\> $\vname{Computations} \gets \vname{Computations} \cup (\hat{f}, P, \vname{id})$ \\
\> Send $(\vname{input}, \hat{f}, P, x)$ to $\mathcal{M}$
\end{tabbing}
\end{minipage} \vspace{6pt} \\
\begin{minipage}{3in}\vspace{6pt}
\begin{tabbing}
123\=12\=12\=12\=12\=\kill
\textbf{Finalize:} Upon receiving $\vname{msg} := (\vname{output}, \hat{f}, P, y)$ from $\manager$:\\
\> Assert $(\hat{f}, P, \vname{id}) \in \vname{Computations}$ \\
\> Assert $(\vname{id}, \vname{coin}) \in \vname{FrozenCoins}$ \\
\> $(\vname{msg}, \coin'_1, \hdots, \coin'_n, \vname{out}, (\val', r')) \gets \aname{U.Finalize}(\vname{id}, y)$ \\
\> $\vname{Coins} \gets (\vname{Coins} \setminus \{\coin\}) \cup \{(\val', r')\}$\\
\> Send $\vname{msg}$ to blockchain \\
\> Remove $(\hat{f}, P, \vname{id})$ from $\vname{Computations}$ \\
\> Remove $(\vname{id}, \coin)$ from $\vname{FrozenCoins}$ 
\end{tabbing}
\end{minipage} 
\end{tabular}
\end{center}
\caption{{\small
User program wrapper $U'(U)$ for PSC evaluation.
}}
\label{fig:user_wrapper}
\end{small}
\end{figure}

The fundamental correctness condition expected of a PSC evaluation protocol $\pi := (\Com, B, U)$ is as follows. Let $f$ be an $n$-ary smart contract function as defined in Definition~\ref{def:scf}. For any $\val_1, \hdots, \val_n \in \valset$ and $\inp_1, \hdots, \inp_n \in \{0, 1\}^\ast$ such that $((\val'_1, \hdots, \val'_n), \vname{out}) \gets f((\val_1, \inp_1), \hdots, (\val_n, \inp_n))$ for $\val'_1, \hdots, \val'_n \in \valset$ and $\vname{out} \in \{0, 1\}^\ast$. Then we say that $\pi$ correctly executes $\pi$ if the following experiment terminates and outputs 1 with all but negligible probability.
\begin{tabular}{l}
\begin{minipage}{3in}
\begin{tabbing}
123\=12\=12\=12\=12\=\kill
\textbf{Experiment } $\vname{Exper}_{\pi}(f, P, (\val_1, \inp_1), \hdots, (\val_n, \inp_n))$: \\
\> Run blockchain process $\mathcal{B}$ with program $B'(B)$ \\
\> Run $\mathcal{P}_i$ with program $U'(U)$ for $i \in [n]$\\
\> Send $(\vname{freeze}, f, P, \val_i, \inp_i)$ to $\mathcal{P}_i$ for $i \in [n]$ \\
\> Wait until $(\vname{id}, P, P, \vname{compute}) \in \mathcal{B}.\vname{Contracts}$ \\
\> Send $(\vname{compute}, \vname{id})$ to $\mathcal{P}_i$ for $i \in [n]$ \\
\> Wait until $(\vname{id}, P, P, \vname{finalized}) \in \mathcal{B}.\vname{Contracts}$ \\
\> Return $\sum \val'_i = \sum \val_i$ where $(\val'_i, \cdot) \in \mathcal{P}_i.\vname{Coins}$
\end{tabbing}
\end{minipage} \vspace{6pt}
\end{tabular}

To clarify on the experiment above, when a process $\mathcal{X}$ is run with a program $X$, the initialization procedure $\aname{X.Init}$ is executed first and then the control is transferred to a message loop where the program waits for incoming messages which are dispatched to their appropriate handlers. 
We denote by $\vname{transcript}_{\pi, \mathcal{S}, \mathcal{R}}$ the sequence of messages sent from $\mathcal{S}$ to $\mathcal{R}$ during the execution of protocol $\pi$. Of particular interest is $\vname{transcript}_{\pi, \mathcal{B}, \mathcal{A}}$ where $\mathcal{B}$ is the blockchain and $\mathcal{A}$ is the adversary. In fact, in our security definition, which is presented next, the objective is to simulate the protocol and produce a transcript that is computationally indistinguishable from that produced in the real world.

\subsection{Security}
We now present a security definition that is strictly weaker than security in the UC framework, which we leave to an extended paper. The security definition we describe in this section is modelled on the typical simulation-based security notion for MPC. However, there are some notable differences. Firstly, the adversary $\mathcal{A}$ interacts with a blockchain, which is modelled as an entity that is trusted for availability and correctness which shares its internal state with the adversary. Secondly, in the real world, the contract participants $\party_i$ interact privately with a trusted third party $\manager$ that cannot be corrupted by the adversary nor can the adversary access the contents of the private channels between $\manager$ and $\party_i$.

Let $I = \{i_1, \hdots, i_t\}$  be the set of indices of the $t \leq n$ corrupted parties. Additionally, we let $\bar{I} = [n] \setminus I$ be the set of indices of the honest parties. Let $f : (\valset \times \{0, 1\}^\ast)^n \to (\valset^n \times \{0, 1\}^\ast) \cup \{\bot\}$ be an $n$-party smart contract function as defined in Definition~\ref{def:scf}. We denote by $\vect{x}$ a vector of \emph{inputs} from all $n$ parties to the protocol; that is, we have $\vect{x} := (x_1 := (\val_1, \inp_1), \hdots, x_n := (\val_n, \inp_n))$. Let $\pi$ be an interactive PSC evaluation protocol as defined in Definition~\ref{def:pscprot}.
\subsubsection{Real World} We now give the experiment for the real world execution of $\pi$ in the presence of an adversary $\mathcal{A}$.

\begin{tabular}{l}
\begin{minipage}{3in}
\begin{tabbing}
123\=12\=12\=12\=12\=\kill
\textbf{Experiment } $\vname{REAL}_{\pi, \mathcal{A}, I}(f, P, \vect{x})_{\mathcal{D}}$: \\
\> Run blockchain process $\mathcal{B}$ with program $B'(B)$ \\
\> Run $\mathcal{P}_i$ with program $U'(U)$ for $i \in \bar{I}$\\
\> Parse $(\val_i, \inp_i) \gets x_i$ for $i \in \bar{I}$ \\
\> Send $(\vname{freeze}, f, P, \val_i, \inp_i)$ to $\mathcal{P}_i$ for $i \in \bar{I}$ \\
\> Wait until $(\vname{id}, P, \cdot, \cdot) \in \mathcal{B}.\vname{Contracts}$ \\
\> Send $(\vname{compute}, \vname{id})$ to $\mathcal{P}_i$ for $i \in \bar{I}$ \\
\> Output $\mathcal{D}(\vname{transcript}_{\pi, \mathcal{B}, \mathcal{A}})$ \\
\end{tabbing}
\end{minipage}
\end{tabular}

\begin{figure}[!ht]
\begin{framed}
\begin{small}
\begin{center}

\hspace{-19pt}
Idealized Entity $\mathcal{M}:$
\begin{tabular}{l}
\begin{minipage}{3in}
\begin{tabbing}
123\=12\=12\=12\=12\=\kill
\textbf{Init:} \\
\> $\vname{Inputs} \gets \{\}$
\end{tabbing}
\end{minipage} \vspace{6pt} \\
\begin{minipage}{3in}\vspace{6pt}
\begin{tabbing}
123\=12\=12\=12\=12\=\kill
\textbf{Input:} Upon receiving $(\vname{input}, F, P, x)$ from $\party$:\\
\> Assert $\party \in P$\\
\> Remove $(F, P, \party, \cdot)$ from $\vname{Inputs}$ if it exists\\
\> $\vname{Inputs} \gets \vname{Inputs} \cup (F, P, \party, x)$\\
\> If $(F, P, \party_j, x_j) \in \vname{Inputs}$ for all $\party_j \in P$:\\
\>\> $y \gets F(x_1, \hdots, x_{n'})$ where $n' = |P|$\\
\>\> Send $(\vname{output}, F, P, y)$ to $\party_j$ for all $\party_j \in P$
\end{tabbing}
\end{minipage}
\end{tabular}
\end{center}

\end{small}
\caption{{\small
Idealized entity $\mathcal{M}$ that privately and correctly executes a designated function on inputs supplied by a set of parties, which are kept private.
}}
\label{fig:m}
\end{framed}
\end{figure}

Furthermore, the blockchain executes blockchain program $B'(B)$.
Recall from the definition of $B'$ that the blockchain sends its internal state to $\mathcal{A}$ after it handles every message.

\subsubsection{Ideal World}
In the ideal world, a simulator $\mathcal{S}$ interacts with an ideal functionality $\mathcal{F}$ that executes the smart contract function $f$. All parties send their inputs to $\mathcal{F}$, then the function $f$ is computed and the result returned to $\mathcal{S}$. The goal of simulation in the ideal world is for $\mathcal{S}$ to simulate the adversary's view in the real world i.e. it must produce a transcript that is computationally indistinguishable from the transcript generated in the real world. We denote by $\vect{x}_J$ for $J \subseteq [n]$ the sub-vector of $\vect{x}$ containing the components of $\vect{x}$ at every index in $J$. 

\begin{tabular}{l}
\begin{minipage}{3in}
\begin{tabbing}
123\=12\=12\=12\=12\=\kill
\textbf{Experiment } $\aname{IDEAL}_{\mathcal{F}, \mathcal{S}, I}(f, P, \vect{x}))_{\mathcal{D}}$: \\
\> $\vname{transcript} \gets \mathcal{S}^{\mathcal{F}_{f, P}( \vect{x}_{\bar{I}}, \cdot)}(f, P, \vect{x}_{I})$\\
\> Output $\mathcal{D}(\vname{transcript})$ \\
\end{tabbing}
\end{minipage}
\end{tabular}

\begin{definition}\label{def:sec}
A PSC evaluation protocol $\pi$ is $t$-secure if for all $I \subseteq [n]$ with $|I| \leq t$, all polynomial-time computable smart contract functions $f$, all polynomial-sized $\vect{x} \in (\{0, 1\}^\ast)^n$ and all subsets of participants $P \subseteq \{\mathcal{P}_i\}_{i \in [n]}$, then for all PPT adversaries $\mathcal{A}$, there exists a PPT simulator $\mathcal{S}$ such that for any PPT distinguisher $\mathcal{D}$ it holds that
\begin{align*}
\Big|\Pr{\vname{REAL}_{\pi, \mathcal{A}, I}(f, P, \vect{x})_\mathcal{D} \rightarrow 1} -& \\ \Pr{\aname{IDEAL}_{\mathcal{F}, \mathcal{S}, I}(f, P, \vect{x})_\mathcal{D} \rightarrow 1}\Big| & \leq \fname{negl}(\lambda)
\end{align*}
where $\fname{negl}(\lambda)$ is a negligible function in the security parameter $\lambda$.
\end{definition}

\section{Our PSC Evaluation Protocol}\label{sec:PSCEP}
The protocol we present ensures both \textit{contract closure} and \textit{immediate closure}\footnote{A protocol is said to support immediate closure if the blockchain can instigate contract closure after receiving a single accepted finalization message.}. Our first idea is to instantiate the commitments used for coins with Pedersen commitments\cite{pedersen1991non}. Let $\mathbb{G}$ be a finite cyclic group of prime order $p$. Let $g$ and $h$ be two generators of $\mathbb{G}$. Then a Pedersen commitment is defined as
\[\Com(x;r) = g^x \cdot h^r\]
Consider input coins $\{\vname{coin}_j = \Com(\val_j; r_j)\}_{j \in [n]}$ and output coins $\{\vname{coin}'_j = \Com(\val'_j; s_j)\}_{j \in [n]}$. Then by the homomorphic property of Pedersen commitments\cite{groth2009homomorphic,pedersen1991non}, we have
\[\prod_{j \in [n]} \vname{coin}'_j / \vname{coin}_j = h^{\sum_{j \in [n]} s_j - r_j}\]
if and only if $\sum_{j \in [n]} \val'_j - \val_j = 0$. Our protocol leverages the Schnorr sigma protocol \cite{schnorr1989efficient} to prove knowledge of the discrete logarithm in base $h$ of $\prod_{j \in [n]} \vname{coin}'_j / \vname{coin}_j$. Note that this sigma protocol can be transformed into a non-interactive proof of knowledge in the random oracle model via the Fiat-Shamir heuristic \cite{fiat1986prove}.

\subsection{Initialization}
We define the algorithms $\aname{B.Init}$ and $\aname{U.Init}$ before we proceed any further.
 Therefore, we have
\begin{itemize}
    \item $\aname{B.Init}()$:
    \begin{itemize}
        \item
        $\vname{FreezeRecords} \gets \emptyset$
    \end{itemize}

    \item $\aname{U.Init}()$:
    \begin{itemize}
        \item $\vname{Identifiers} \gets \emptyset$
        \item
        $\vname{SecretElems} \gets \emptyset$
    \end{itemize}
\end{itemize}

A set of $m$ participants decide to create a private smart contract by specifying an $m$-ary smart contract function $f$ along with a set of participant identities $P$ (i.e. their pseudonyms in the underlying cryptocurrency). For simplicity, we often assume that $P$ contains all participants we have defined in our security definition i.e. $m = n$ and $P = \{\mathcal{P}_i\}_{i \in [n]}$ but this is without loss of generality since it is easy to see that any subset of $m \leq n$ participants can be supported. We assume that each party gives their consent to the smart contract function $f$ that is used; that is, each party has knowledge of the code of $f$, inspects it and agrees that it is fair. Specifying and validating the correctness of $f$ is beyond the scope of this paper and literature abounds on this. For example, we would assume that each party contributes its own preconditions and post-conditions to the smart contract code in order to establish consensus. Although we do not aim to keep $f$ itself private in this paper, the astute reader will observe that it is straightforward to achieve this with our protocol. We now specify the $\aname{U.Create}$ algorithm which takes as input a smart contract function $f$ and a set of participants $P$ and returns a unique identifier for this pair. Let $H$ be a collision-resistant hash function, modelled as a random oracle in the security analysis (as we shall see later), that maps an arbitrary-length string to a uniformly random string $\{0, 1\}^{\lambda}$ where $\lambda$ is the security parameter.
\begin{itemize}
    \item $\aname{U.Create}(f, P)$:
    \begin{itemize}
        \item $\vname{id} \gets H(f \concat P)$
        \item $\vname{Identifiers} \gets \vname{Identifiers} \cup \{(\vname{id}, f, P)\}$
        \item Return $\vname{id}$.
    \end{itemize}
\end{itemize}
Much of what we have defined thus far is ``boilerplate" and possibly common to any PSC evaluation protocol. Next, we define the algorithms used in the freeze phase.

\subsection{Freezing Coins}
In the freeze phase of the protocol, we employ the sigma protocol from \cite{groth2015one} that proves in zero knowledge that a commitment commits to $0$ or $1$. We use its transformation (in the random oracle model) into a NIZK via the Fiat-Shamir heuristic \cite{fiat1986prove}, and denote the scheme by $\vname{BNIZK} := (\aname{BNIZK.Prove}, \aname{BNIZK.Verify})$. We have already established that $\valset$ denotes the set of valid amounts of currency. Here we ordain the upper limit $L$ of a valid amount of currency to be $2^\ell$ such that a valid amount of currency lies in the range $[0, 2^{\ell})$ for some $\ell$. In addition, we decompose an output coin $\coin'$ into a product of $\ell$ commitments corresponding to the binary decomposition of the output value. For each $k \in \{0, \hdots, \ell - 1\}$, the party generates two commitments, one of them commits to zero and the other commits to one. Importantly, it permutes the order of these commitments when it sends them to the blockchain so that the blockchain does not know whether the first or the second one commits to zero etc. More precisely, for each $k \in \{0, \hdots, \ell - 1\}$, we sample a bit $b_k \samplefrom \{0, 1\}$ and compute $c^{(k, b_k)}$ as a commitment to $b_k$ and $c^{(k, 1 - b_k})$ as a commitment to $1 - b_k$. We prove in zero knowledge that both commitments commit to a bit using $\vname{BNIZK}$ and sends both commitments along with the proofs to the blockchain. These steps take place in the $\aname{U.Freeze}$ algorithm, which we now formally define:
\begin{itemize}
    \item $\aname{U.Freeze}(\vname{id}, (\val, r), \inp):$
\begin{itemize}
\item
Assert $(\vname{id}, f, P) \in \vname{Identifiers}$
\item
$\coin \gets \Com(\val; r)$
    \item For $k \in \{0, \hdots, \ell - 1\}$:
    \begin{itemize}
        \item
        $b_k \samplefrom \{0, 1\}$
        \item
        $s^{(k, 0)} \samplefrom \ZZ_p$
        \item
        $s^{(k, 1)} \samplefrom \ZZ_p$
        \item
        $c^{(k, b_k)} \gets \Com(b_k; s^{(k, b_k)})$
        \item
        $c^{(k, 1 - b_k)} \gets \Com(1 - b_k; s^{(k, 1 - b_k)})$
        \item
        $\pi^{(k, b_k)} \gets \aname{BNIZK.Prove}(c^{(k, b_k)}, (s^{(k, b_k)}, b_k))$
        \item
        $\pi^{(k, 1 - b_k)} \gets \aname{BNIZK.Prove}(c^{(k, 1 - b_k)}, (s^{(k, 1 - b_k)}, \\ 1 - b_k))$
    \end{itemize}
    \item
    $\vname{SecretElems} \gets \vname{SecretElems} \cup \{(\vname{id}, \val, r, \inp,\\ \{(c^{(k, 0)}, c^{(k, 1)}, s^{(k, 0)}, s^{(k, 1)})\}_{0 \leq k < \ell})\}$
    \item
        $
        \begin{array}{ll}
        \vname{msg} \gets
        (\vname{freeze}, \vname{id}, P, \vname{coin}, \\
        \{((c^{(k, b_k)}, \pi^{(k, b_k)}), (c^{(k, 1 - b_k)}, \pi^{(k, 1 - b_k)}))\}_{0 \leq k < \ell})
        \end{array}$
        \item
        Return $\vname{msg}$
\end{itemize}
\end{itemize}

where $\vname{id}$ is the contract identifier, $P$ is the set of contract participants i.e. $\{\party_i\}_{i \in [n]}$.

The blockchain responds by executing the following steps, which are part of the $\aname{B.Freeze}$ algorithm that we now define:
\begin{itemize}
    \item $\aname{B.Freeze}(\vname{msg}, \party)$:
\begin{itemize}
    \item
    Parse $\vname{msg}$ as
    
    $\begin{array}{cc}
        (\vname{freeze}, \vname{id}, P, \vname{coin}, \\
         \{((c^{(k, b_k)}, \pi^{(k, b_k)}), (c^{(k, 1 - b_k)}, \pi^{(k, 1 - b_k)}))\}_{0 \leq k < \ell})
        \end{array}$
    \item
    For $k \in \{0, \hdots, \ell - 1\}$:
    \begin{enumerate}
        \item Assert $\aname{BNIZK.Verify}(c^{(k, b_k)}, \pi^{(k, b_k)})$
        \item
        Assert $\aname{BNIZK.Verify}(c^{(k, 1 - b_k)}, \pi^{(k, 1 - b_k)})$
        \item
        $C^{(k)} \gets \{c^{(k, b_k)}, c^{(k, 1 - b_k)}\}$
    \end{enumerate}
    \item
    $\vname{FreezeRecords} \gets \vname{FreezeRecords} \cup \{(\vname{id}, \party, \coin,\\ \{C^{(k)}\}_{0 \leq k < \ell})\}$
    \item
    Return $1$
\end{itemize}
\end{itemize}

\subsection{Computation}
Consider a smart contract function $f: (\valset \times\{0,1\}^{*})^{n} \rightarrow \valset^{n}\times\{0,1\}^{*}$. Let $\val_i$ and $\inp_i$ be the input currency value and input string of party $\party_i$ respectively.\\
All parties obtain the tuples $(\vname{id}, \party_i, \coin_i, \{C_i^{(k)}\}_{0 \leq k < \ell})$ from the blockchain.

We now define the following function $\hat{f}$. We use the notation $\val[k]$ to denote the $k$-th bit of $\val$.

\begin{figure}[!ht]
\begin{framed}
\begin{small}
\begin{center}
\textbf{MPC Function $\hat{f}$}
\hspace{-19pt}
\begin{tabular}{l}
\begin{minipage}{3in}
\begin{tabbing}
123\=12\=12\=12\=12\=\kill
 \\
$\hat{f}((\val_1, r_1, \inp_1, (c_1^{(k, 0)}, c_1^{(k, 1)}, s_1^{(k, 0)}, s_1^{(k, 1)})_{0 \leq k < \ell})$, \\
\>    $\hdots, (\val_n, r_n, \inp_n, (c_n^{(k, 0)}, c_n^{(k, 1)}, s_n^{(k, 0)}, s_n^{(k, 1)})_{0 \leq k < \ell} ))$\\
\> $((\val'_1, \hdots, \val'_n), \vname{out}) \gets $\\
\>\>\>        $f((\val_1, \inp_1), \hdots, (\val_n, \inp_n))$ \\
\> For all $j \in [n]$: \\
\>\> For all $k \in \{0, \hdots, \ell - 1\}$:\\
\>\>\> $c_j^{(k)} \gets c_j^{(k, \val'_j[k])}$ \\
\> $r \gets \sum_{j \in [n]} (\sum_{0 \leq k < \ell} 2^k \cdot s_j^{(k, \val'_j[k])}- r_j)$ \\
\> Compute proof $\pi$ with Schnorr protocol with\\
\>\>\>knowledge $r$ and base $h$ \\
\> Return $(((c_1^{(k)})_{0 \leq k < \ell}, \hdots, (c_n^{(k)})_{0 \leq k < \ell}), \pi, \vname{out})$
\end{tabbing}
\end{minipage}
\end{tabular}
\end{center}

\end{small}
\caption{{\small
Definition of the function that is evaluated by the MPC protocol. This function is built from $f$, the smart contract function.
}}
\label{fig:mpcfunc}
\end{framed}
\end{figure}

We can observe two things here. Firstly, all parties must collaborate to produce an accepting proof $\pi$, which serves to validate the zero-sum constraint. The second point is that symmetric encryption of the $\val'_i$ is not needed since the value $\val'_i$ can be read by $\party_i$ (and only $\party_i$) from $((c_i^{(k)})_{0 \leq k < \ell}$.

Now we define the $\aname{U.PrepareCompute}$ algorithm:
\begin{itemize}
    \item $\aname{U.PrepareCompute}(\vname{id})$:
    \begin{itemize}
    \item
    Assert $(\vname{id}, f, P) \in \vname{Identifiers}$
    \item
    Assert $(\vname{id}, \val, r, \inp, \{(c^{(k, 0)}, c^{(k, 1)}, s^{(k, 0)}, s^{(k, 1)})\}_{0 \leq k < \ell}) \\
    \in \vname{SecretElems}$
        \item Fetch $(\vname{id}, \party_i, \coin_i, \{C_i^{(k)}\}_{0 \leq k < \ell})$ from $\mathcal{B}.\vname{FreezeRecords}$ for each $\party_i \in P$ (recall we assume that $|P| = n$)
        \item
        Define function $\hat{f}$ as in Figure~\ref{fig:mpcfunc} using function $f$.
        \item
        $x \gets (\val, r, \inp, (c^{(k, 0)}, c^{(k, 1)}, s^{(k, 0)}, s^{(k, 1)})_{0 \leq k < \ell})$
        \item
        Return $(\hat{f}, P, x)$
    \end{itemize}
\end{itemize}

\subsection{Finalization}
Since this protocol supports immediate closure, a valid finalization message from a single party is sufficient to instigate contract closure. The MPC protocol is executed between the parties during the computation phase and it concludes with a final output $y$ that is accessible to all the parties. Therefore, we now define the algorithm, namely $\aname{U.Finalize},$ that processes this $y$ and produces a $\vname{finalize}$ message to be sent to the blockchain along with extracting all output coins $\coin'_1, \hdots, \coin'_n$.
\begin{itemize}
    \item $\aname{U.Finalize}(\vname{id}, y)$:
    \begin{itemize}
    \item
    Assert $(\vname{id}, f, P) \in \vname{Identifiers}$
    \item
    Assert $(\vname{id}, \val, r, \inp,\\  \{(c^{(k, 0)}, c^{(k, 1)}, s^{(k, 0)}, s^{(k, 1)})\}_{0 \leq k < \ell})
    \in \vname{SecretElems}$
    \item Fetch $(\vname{id}, \party_i, \coin_i, \{C_i^{(k)}\}_{0 \leq k < \ell})$ from $\mathcal{B}.\vname{FreezeRecords}$ for each $\party_i \in P$ (recall we assume that $|P| = n$)
    \item
    Parse $y$ as $(((c_1^{(k)})_{0 \leq k < \ell}, \hdots, (c_n^{(k)})_{0 \leq k < \ell}), \pi, \vname{out})$
    For all $j \in [n]$:
    \begin{enumerate}
        \item 
        For all $k \in \{0, \hdots, \ell - 1\}$:
        \begin{enumerate}
            \item Assert $c_j^{(k)} \in C_j^{(k)}$
        \end{enumerate}
        \item
        $\vname{coin}'_j \gets \prod_{0 \leq k < \ell} (c_j^{(k)})^{2^k}$ 
    \end{enumerate}
    \item
    $c \gets \prod_{j \in [n]} \vname{coin}'_j / \vname{coin}_j$
    \item
    Assert $\aname{Schnorr.Verify}((c, h), \pi)$
    \item
    Let $i'$ be the index of \emph{this} party
    \item
    For $k \in \{0, \hdots, \ell - 1\}$:
    \begin{enumerate}
        \item If $c_{i'}^{(k)} = c_{i'}^{(k, 0)}$:
        \begin{itemize}
            \item $v_k \gets 0$
        \end{itemize}
        Else:
        \begin{itemize}
            \item $v_k \gets 1$
        \end{itemize}
    \end{enumerate}
    \item
    $\val' \gets \sum_{0 \leq k < \ell}v_k \cdot 2^k$
    \item
    $r' \gets \sum_{0 \leq k < \ell} s_{i'}^{(k, v_k)} \cdot 2^k$
    \item
    $\coin'_{i'} \gets \Com(\val',r')$
    \item
    $\vname{msg} \gets (\vname{finalize}, \vname{id}, ((c_1^{(k)})_{0 \leq k < \ell}, \hdots, (c_n^{(k)})_{0 \leq k < \ell}),\\ \vname{out}, \pi)$
    \item
    Return $(\vname{msg}, \coin'_1, \hdots, \coin'_n, \vname{out}, (\val', r'))$
    \end{itemize}
\end{itemize}

The message that is returned in the algorithm above is sent by the contract participant to the blockchain, which processes it in the following algorithm $\aname{B.Finalize}$:
\begin{itemize}
    \item $\aname{B.Finalize}(\vname{msg})$:
    \begin{itemize}
    \item
     Parse $\vname{msg}$ as $(\vname{finalize}, \vname{id}, \\ ((c_1^{(k)})_{0 \leq k < \ell}, \hdots, (c_n^{(k)})_{0 \leq k < \ell}), \vname{out}, \pi)$
    \item Assert $(\vname{id}, \party_i, \coin_i, \{C_i^{(k)}\}_{0 \leq k < \ell}) \in \vname{FreezeRecords}$ for each $\party_i \in P$ (recall we assume that $|P| = n$)
    \item
    For all $j \in [n]$:
    \begin{enumerate}
        \item 
        For all $k \in \{0, \hdots, \ell - 1\}$:
        \begin{enumerate}
            \item Assert $c_j^{(k)} \in C_j^{(k)}$
        \end{enumerate}
        \item
        $\vname{coin}'_j \gets \prod_{0 \leq k < \ell} (c_j^{(k)})^{2^k}$ 
    \end{enumerate}
    \item
    $c \gets \prod_{j \in [n]} \vname{coin}'_j / \vname{coin}_j$
    \item
    Return $\aname{Schnorr.Verify}((c, h), \pi)$
    \end{itemize}
\end{itemize}

\section{Security Analysis}\label{sec:security}
In this section we give a proof of security for our PSC evaluation protocol $\pi := (\Com, B, U)$ defined in the previous section. More precisely, we prove that $\pi$ is $t$-secure for all $t < n$ in accordance with Definition~\ref{def:sec} assuming the hardness of the discrete logarithm problem in the random oracle model. We must construct a simulator $\mathcal{S}$ that uses an adversary $\mathcal{A}$ to simulate a transcript that is computationally indistinguishable to any PPT distinguisher $\mathcal{D}$ from the transcript produced in the real world execution of $\pi$. To correctly make use of $\mathcal{A}$ in the simulation, we must correctly simulate $\mathcal{A}$'s view which requires us to simulate the honest parties, the blockchain and $\mathcal{M}$. The simulator is permitted access to an ideal functionality $\mathcal{F}$ that computes the smart contract function. Note that the current version of our protocol, for simplicity, does not provide verification for the smart contract output string $\vname{out}$ nor does our correctness condition check its veracity. We note that it is straightforward to modify the protocol such that the output string is included in the Schnorr proof such that the resulting proof serves to encompass a ``vote" amongst all participants of the correctness of $\vname{out}$ since the MPC program is assumed to be correct.

\begin{theorem}
Assuming a $t$-secure MPC protocol, our PSC evaluation protocol is $t$-secure for all $t < n$ assuming the hardness of the discrete logarithm problem in the random oracle model.
\end{theorem}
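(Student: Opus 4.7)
The plan is to construct a simulator $\mathcal{S}$ through a sequence of hybrids starting from the real execution $\vname{REAL}_{\pi, \mathcal{A}, I}$ and ending at the ideal execution $\aname{IDEAL}_{\mathcal{F}, \mathcal{S}, I}$. In the first hybrid I invoke the assumed $t$-security of the underlying MPC protocol to replace the in-protocol MPC execution with the idealized entity $\manager$; the MPC simulator additionally yields an extractor that, from the corrupted parties' MPC messages, recovers the effective inputs $\hat{x}_j$ (value, randomness, commitment openings and auxiliary string) that they supply to $\manager$.

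Next I modify the freeze messages emitted on behalf of honest parties. Because Pedersen commitments are perfectly hiding, $\coin_i = \Com(\val_i; r_i)$ has the same distribution for any $\val_i$ when $r_i$ is uniform, so I may replace each honest $\val_i$ by $0$ in $\coin_i$ without changing the joint distribution of the freeze messages. The bit commitments $c_i^{(k, 0)}$ and $c_i^{(k, 1)}$ remain honestly generated commitments to $0$ and $1$, and their $\vname{BNIZK}$ proofs are produced honestly using the known witnesses. In a subsequent hybrid I replace the Schnorr proof returned by $\manager$ with a transcript produced by the standard Fiat--Shamir simulator for the Schnorr sigma protocol, which programs the random oracle $H$ on the required transcript point; since $c = \prod_{j \in [n]} \coin'_j / \coin_j$ always lies in $\mathbb{G} = \langle h \rangle$, the statement is true and the simulated proof is indistinguishable from an honest one in the ROM. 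At this point $\mathcal{S}$ no longer uses the honest parties' real input values, so it can be moved to the ideal world: after extracting $\hat{x}_I$, $\mathcal{S}$ invokes the oracle $\mathcal{F}_{f, P}(\vect{x}_{\bar{I}}, \hat{x}_I)$ to obtain $y^\ast = f(\vect{x})$, reads off each output value $\val'_j$, selects $c_j^{(k)} \gets c_j^{(k, \val'_j[k])}$ for every $j$ and $k$, and emits these along with the simulated Schnorr proof as the MPC output $y$.

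The two technical pillars of the argument are the perfect hiding of Pedersen commitments and the ZK/soundness properties of $\vname{BNIZK}$ and Schnorr in the ROM under the hardness of DLP. Soundness of $\vname{BNIZK}$ ensures that each accepted pair $C_j^{(k)}$ really does contain one commitment to $0$ and one to $1$, so the output coins reconstructed during finalization are well-formed. If a corrupted party's extracted $(\hat{v}_j, \hat{r}_j)$ is inconsistent with its on-chain $\coin_j$, the binding of Pedersen together with Schnorr soundness forces the finalization proof to fail with overwhelming probability, and the simulator mirrors this abort in the ideal world so that success and failure events line up in distribution.

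The main obstacle I anticipate is the interaction between the ROM programming required by the Schnorr simulator and queries $\mathcal{A}$ may already have issued to $H$. Because the transcript point to be programmed involves a freshly sampled $a = h^u$ with $u \samplefrom \ZZ_p$, the chance that $\mathcal{A}$ queried that point in advance is bounded by $q_H / |\mathbb{G}|$ and hence negligible; a prior collision would force an abort and is absorbed into the $\fname{negl}(\lambda)$ advantage. A related subtlety is keeping the programmings consistent across the many simulated Schnorr proofs and with the random-oracle usage inside the MPC simulator, which is handled by treating each proof as a fresh random-oracle commitment and by folding the MPC simulator's own oracle queries into the same global $H$.
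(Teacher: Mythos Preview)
Your hybrid sequence is close in spirit to the paper's, and the treatment of the input coins (perfect hiding of Pedersen), the Schnorr simulation, and the soundness argument for the zero-sum check all match. There is, however, one substantive divergence that matters.

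You elect to keep the honest parties' bit commitments $c_i^{(k,0)}, c_i^{(k,1)}$ and their $\vname{BNIZK}$ proofs \emph{honestly generated}, and you then have the simulator query $\mathcal{F}$, read off \emph{every} output value $\val'_j$ (including those of honest parties), and use $\val'_j[k]$ to pick $c_j^{(k)} \gets c_j^{(k,\val'_j[k])}$. The paper proceeds differently: it first simulates the $\vname{BNIZK}$ proofs (Hybrid~3) and then replaces each honest bit commitment $c_i^{(k,b)}$ by a commitment to a uniformly random element of $\ZZ_p$ (Hybrid~4). After this step both members of each pair $C_i^{(k)}$ are statistically identical random group elements, so the simulator can output an arbitrary (or random) member as $c_i^{(k)}$ \emph{without knowing} $\val'_i$. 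This is precisely what lets the paper reach Hybrid~6 with ``no dependence on the inputs \emph{and outputs} of the honest parties''; the ideal functionality is only asked for $((\val'_i)_{i\in I},\vname{out})$.

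Your route therefore proves a weaker statement: it works only if $\mathcal{F}$ is taken to hand the simulator the honest parties' output values as well. Under the paper's intended reading of the ideal world (and its stated goal of strong output privacy), your simulator as written has a gap at exactly the step ``reads off each output value $\val'_j$'' for $j\in\bar I$. The fix is straightforward and is what the paper does: simulate $\vname{BNIZK}$, randomize the honest bit commitments, and then the selection step for honest $j$ becomes information-free.

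A smaller point: in the paper's real-world model the MPC is already replaced by the idealized entity $\manager$ (see the user wrapper and the $\vname{REAL}$ experiment), so your first hybrid ``invoke $t$-security of the MPC to replace it by $\manager$'' is not needed here; the paper's Hybrid~1 simply has the simulator play $\manager$ directly and read the corrupted inputs from the $(\vname{input},\cdot)$ messages.
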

\begin{proof}
We prove the theorem via a hybrid argument. Our goal is to move to a hybrid where we no longer depend on the inputs of the honest parties i.e. $(x_i := (\val_i, \inp_i))_{i \in \bar{I}}$. Recall that $I$ denotes the set of indices of the corrupt parties whereas $\bar{I}$ denotes the set of indices of the honest parties. The reductions underlying indistinguishability of the various hybrids are relatively straightforward and for brevity sake, we omit them here.

\noindent \textbf{Hybrid 0}: This is the real system.

\vspace{1pt}

\noindent \textbf{Hybrid 1}: The change we make in this hybrid is to simulate $\mathcal{M}$ and its output $y$ that it sends to all parties. We receive the $(\vname{input}, \cdot)$ messages from the adversary for the corrupted parties and obtain the inputs of the corrupted parties. In this hybrid, we still have access to the inputs of the honest parties. We obtain $y$ be computing the function $\hat{f}$ in Figure~\ref{fig:mpcfunc} and send it to all parties as an ($\vname{output,\cdot}$) message, therefore simulating $\mathcal{M}$.
This hybrid is distributed identically to the previous hybrid.

\vspace{1pt}

\noindent \textbf{Hybrid 2}: The change we make in this hybrid is to the Schnorr proof $\pi$ component of $y$ (i.e. the output of $\hat{f}$). Instead of computing the proof normally, we use the simulator for the NIZK. Therefore, we no longer need the randomness $r_i$ and $s_i^{(k, b)}$ to generate $\pi$ and it can still be generated even when the zero-sum condition does not hold.
The hybrids are indistinguishable from the zero-knowledge property of the Schnorr protocol.

\vspace{1pt}

\noindent For $i \in \bar{I}$ and $k \in \{0, \hdots, \ell - 1\}$:

\noindent \textbf{Hybrid 3, i, k}: In this hybrid, we do not compute the proofs $\pi_i^{(k, 0)}$ and $\pi_i^{(k, 1)}$ using $\aname{BNIZK.Prove}$ but instead use the simulator for $\vname{BNIZK}$.
Indistinguishability of the hybrids follows form the zero-knowledge property of $\vname{BNIZK}$.

\vspace{1pt}

\noindent For $i \in \bar{I}$ and $k \in \{0, \hdots, \ell - 1\}$ and $b \in \{0, 1\}$:

\noindent \textbf{Hybrid 4, i, k, b}: In this hybrid, we compute the commitment $c_i^{(k, b)}$ as a commitment to a uniformly random element of $\ZZ_p$.
Indistinguishability of the hybrids from the perfectly blinding property of Pedersen commitments.

\vspace{1pt}

\noindent For $i \in \bar{I}$:

\noindent \textbf{Hybrid 5, i}:
In this hybrid, we change the input coin commitment $\coin_i$ to a commitment to a uniformly random element of $\ZZ_p$.
Indistinguishability of the hybrids from the perfectly blinding property of Pedersen commitments.

\vspace{1pt}

\noindent \textbf{Hybrid 6}: In this hybrid, we change how we compute $(\val'_i)_{i \in I}, \vname{out})$ by instead calling the ideal functionality $\mathcal{F}$ and sending it the inputs of the corrupted parties, which have been obtained from $\mathcal{A}$.
This hybrid is distributed identically to the previous hybrid.
This hybrid has no dependence on the inputs and outputs of the honest parties and hence we obtain a simulator that uses the adversary $\mathcal{A}$ to produce a transcript that is computationally indistinguishable from the real world (Hybrid 0).  We have shown that privacy holds. In regard to correctness, if $\mathcal{A}$ sends a finalization message to the blockchain with a proof that is accepted where the zero-sum constraint does not hold, then we can construct an algorithm that uses $\mathcal{A}$ to produce a proof that violates soundness of the Schnorr NIZK, which contradicts the statement of the theorem. The result follows.
\end{proof}

\section{Conclusion and Future Work with NI-PSC}\label{sec:nipsc}
The protocol presented above facilitates Interactive Private Smart Contracts (I-PSC) wherein an Interactive protocol (MPC) between the parties is executed replacing the manager in Hawk in order to determine the outcome of the smart contract and reach consensus on the payout distribution. At least one of the parties must then notify the blockchain of the outcome so that the blockchain can instigate contract closure. A disadvantage of I-PSC is that all parties must remain online for computation to proceed. A more desirous and powerful alternative is a protocol facilitating Non-Interactive Private Smart Contracts (NI-PSC) wherein the blockchain performs the computation of the smart contract while retaining input privacy; that is, its computation is oblivious to the inputs. Parties need not be online during computation nor are they required to interact. Furthermore, there is no possibility for aborts as in the interactive case and the execution of smart contracts can be scheduled for specified times and use public information such as stock prices, a situation that may lead to aborts in the interactive case e.g. a stock price that negatively affects a party's position in a contract may encourage that party to abort the protocol. In a NI-PSC, only one round of interaction with the blockchain is needed to execute a private smart contract, namely the freeze round. In the freeze round, each party sends a message containing (1). a unique identifier $\vname{id}$ that identifies the contract (e.g. a hash of the contract function, execution number and set of participants); (2). the set of participants $P$; (3). the party's input coin; and (4). protocol-specific information. The round is complete when the blockchain receives a valid freeze message with matching $(\vname{id}, P)$ for every $\party \in P$. The blockchain can then independently compute the output of the smart contract and instigate contract closure.

\subsection{Realizing NI-PSC}
NI-PSC can be realized with a primitive known as Non-Interactive MPC (NI-MPC). We assume we are in the PKI (public key infrastructure) model where the parties know each other's authentic public key. Then, in NI-MPC, each party need only send one message, which contains its input and the function to be evaluated. Given the messages from all parties, the desired function can be computed on the parties' inputs. Choosing the function $f'$ described in our protocol above as the function computed by NI-MPC would fully realize NI-PSC.  Known constructions of NI-MPC rely on indistinguishability obfuscation (iO) \cite{barak2001possibility,ananth2015indistinguishability}, which has not been realized from standard assumptions. Basing NI-MPC on standard assumptions is a stubbornly difficult open problem because NI-MPC for general functions implies iO. Since ``pure" NI-PSC is so difficult to achieve under standard assumptions, we intend to consider various relaxations as part of future work in order to give constructions under standard assumptions.

\bibliographystyle{./IEEEtran}
\bibliography{./zkhawk}

\end{document}